\newcommand{\bs}{\boldsymbol}
\DeclareMathAlphabet{\mathpzc}{OT1}{pzc}{m}{it}
\newtheorem{theorem}{Theorem}
\newtheorem{lemma}{Lemma}
\newtheorem{definition}{Definition}
\newcommand\blfootnote[1]{%
  \begingroup
  \renewcommand\thefootnote{}\footnote{#1}%
  \addtocounter{footnote}{-1}%
  \endgroup
}
\begin{document}
\title{Outage Performance Analysis of Full-Correlated Rayleigh MIMO Channels}
\author{\IEEEauthorblockN{Huan Zhang\IEEEauthorrefmark{1}, Guanghua~Yang\IEEEauthorrefmark{2}, Zheng~Shi$^{\textrm{\Letter}}$\IEEEauthorrefmark{2}, Shaodan~Ma\IEEEauthorrefmark{1}, and
Hong~Wang\IEEEauthorrefmark{3}\\
\vspace{-1em}
\IEEEauthorrefmark{1}Department of Electrical and Computer Engineering, University of Macau, China}\\
\IEEEauthorrefmark{2}School of Intelligent Systems Science and Engineering, Jinan University, Zhuhai, China\\
\IEEEauthorrefmark{3}School of Communication and Information Engineering, Nanjing University of Posts and Telecommunications, Najing, China
\vspace{-1.5em}}
\maketitle
\begin{abstract}
The outage performance of multiple-input multiple-output (MIMO) technique has received intensive attention to meet the stringent requirement of reliable communications  for 5G applications, e.g., mission-critical machine-type communication (cMTC). To account for spatial  correlation effects at both transmit and receive sides, the full-correlated Rayleigh MIMO fading channels are modeled according to Kronecker correlation structure in this paper. The outage probability is expressed as a weighted sum of the generalized Fox's H functions. The simple analytical result empowers asymptotic outage analysis at high signal-to-noise ratio (SNR), which not only reveal helpful insights into understanding the behavior of fading effects, but also offer useful design guideline for MIMO configurations. Particularly, the negative impact of the spatial correlation on the outage probability is revealed by using the concept of majorization, and the asymptotic outage probability is proved to be a monotonically increasing and convex function of the transmission rate. 
In the end, the analytical results are validated through extensive numerical experiments.
\end{abstract}
\begin{IEEEkeywords}
Asymptotic analysis, Mellin transform, MIMO, Outage probability, Spatial correlation.

\end{IEEEkeywords}
\IEEEpeerreviewmaketitle
\section{Introduction}
\blfootnote{The corresponding author is Zheng Shi. This work was supported in part by the National Key Research and Development Program of China under Grant 2017YFE0120600, in part by National Natural Science Foundation of China under Grants 61801192 and 61801246, in part by Guangdong Basic and Applied Basic Research Foundation under Grant 2019A1515012136, in part by the Science and Technology Planning Project of Guangdong Province under Grants 2018B010114002 and 2019B010137006, in part by the Science and Technology Development Fund, Macau SAR (File no. 0036/2019/A1 and File no. SKL-IOTSC2018-2020), in part by Open Research Foundation of National Mobile Communications Research Laboratory of Southeast University under Grant 2018D09, and in part by the Research Committee of University of Macau under Grant MYRG2018-00156-FST.}

\IEEEPARstart{5}{G} systems are anticipated to not only support extraordinarily high data rate and capacity, but also provide ultra-reliability, scalability and low latency for emerging communication paradigms, e.g., mission-critical machine-type communications (cMTC) \cite{Tullberg2016METIS}. The multiple-input multiple-output (MIMO) is an indispensable enabler that fulfills these requirements for 5G. Specifically, the MIMO technique capitalizes on the spatial dimension to explore its potentials of boosting the spectral efficiency and reliability \cite{telatar1999capacity}.
Most of the existing works concentrate on studying the information-theoretical capacity of MIMO systems for the purpose of the spectral efficiency enhancement\cite{shin2003capacity, hanlen2012capacity}.

Aside from the spectral efficiency, the reliability of communications has become of ever-increasing importance in the Internet-of-Things (IoT) applications (e.g., automated transportation, industrial control and augmented/virtual reality), ultra-reliable low-latency communications (URLLC) and tactile Internet \cite{avranas2018energy}. Since the outage probability is frequently used to characterize the reception reliability, the outage probability of MIMO systems also has attracted considerable attention in the literature \cite{telatar1999capacity,foschini1998limits,hochwald2004multi,
wang2004outage}. However, the prior works in \cite{telatar1999capacity,foschini1998limits,hochwald2004multi,wang2004outage} did not take into account the correlation between antenna elements, which exists in realistic propagation environments because of mutual antenna coupling and close spacing between adjacent elements \cite{shin2008mimo}. The spatial correlation would remarkably impair
 the reliability of MIMO systems. In \cite{khan2005capacity},  the character expansion method was initially introduced to give a closed-form expression for the moment-generating function (MGF) of the capacity under full-correlated (correlation at both the transmitter and receiver) Rayleigh MIMO channels if the numbers of transmit and receive antennas are identical. The same method was further extended to derive the MGF of the capacity for the case with arbitrary numbers of transmit and receive antennas in \cite{simon2006capacity}. Unfortunately, the outage probability for full-correlated Rayleigh MIMO systems was obtained in the literature by relying upon either approximations or numerical inversions, which impede the extraction of more helpful insights about the system parameters. Finally, the analytical results are verified by numerical analysis.

To address the above issues, The Mellin transform is applied in this paper to derive exact and tractable representations for the outage probabilities. Upon the exact expressions, the asymptotic analysis of the outage probability in the high signal-to-noise ratio (SNR) regime is derived. The expression demonstrates that full diversity can be achieved regardless of the presence of spatial correlation. The qualitative relationship between the spatial correlation and the outage probability is established by virtue of the concept of majorization in \cite{shin2008mimo}. Moreover, the transmission rate affects the outage performance via the term of modulation and coding gain, and the asymptotic outage probability is found to be an increasing and convex function of the transmission rate.\\
\emph{Notations:} We shall use the following notations throughout the paper. Bold uppercase and lowercase letters are used to denote matrices and vectors, respectively.  ${\bf A}^{\mathrm{H}}$, ${\bf A}^{-1}$ and ${\bf A}^{1/2}$ denote the conjugate transpose, matrix inverse and Hermitian square root of matrix ${\bf A}$, respectively. $\mathrm{vec}$, $\rm{tr}$, $\mathrm{det}$ and $\mathrm{diag}$ are the operators of vectorization, trace, determinant and diagonalization, respectively. $\Delta \left( {\bf A} \right)$ refers to the Vandermonde determinant of the eigenvalues of matrix ${\bf A}$. $\mathbf{0}_n$ and $\mathbf{I}_n$ stand for $1 \times n$ all-zero vector and $n \times n$ identity matrix, respectively. $\mathbb C^{m\times n}$ denotes the sets of $m\times n$-dimensional complex matrices. The symbol ${\rm i}=\sqrt{-1}$ is the imaginary unit. $o(\cdot)$ denotes little-O notation. $(\cdot)_n$ represents Pochhammer symbol. $|S|$ refers to the cardinality of set $S$. Any other notations will be defined in the place where they occur.
\vspace{-1em}
\section{System Model}\label{sec:sys_mod}
By considering a point-to-point MIMO system with $N_t$ transmit and ${N_r}$ receive antennas, the received signal vector $\mathbf{y}\in\mathbb{C}^{{N_r}\times 1}$ is written as
\begin{equation}\label{eq:pp_sig}
\mathbf{y}=\sqrt{\frac{P}{{N_t}}}\mathbf{H}\mathbf{x}+\mathbf{n},
\end{equation}
where $\mathbf{H}\in \mathbb{C}^{{N_r}\times {N_t}}$ is the matrix of the channel coefficients, $\mathbf{x}\in \mathbb{C}^{{N_t}\times 1}$ denotes the vector of transmitted signals, $\mathbf{n}\in \mathbb{C}^{{N_r}\times 1}$ represents the complex-valued additive white Gaussian noise vector with zero mean and covariance matrix $\sigma^{2}\mathbf{I}_{N_r}$, and $P$ is the total average transmitted power. Moreover, in order to account for the effect of the antenna correlation, the channel matrix $\mathbf{H}$ is modeled herein according to the Kronecker correlation structure as\cite{larsson2008space}
\begin{equation}\label{kron_mod}
\mathbf{H}={\mathbf{R}_r}^{{1}/{2}}{\bf H}_w{\mathbf{R}_{t}}^{{1}/{2}},
\end{equation}
where $\mathbf{H}_w\in \mathbb{C}^{{N_r}\times {N_t}}$ is a random matrix whose entries are independent and identically distributed (i.i.d.), complex circularly symmetric Gaussian random variables, i.e., $\mathrm{vec}\left(\mathbf{H}_w\right) \sim \mathcal{CN}\left(\mathbf 0_{{N_t}{N_r}},\mathbf I_{N_t}\otimes \mathbf I_{N_r}\right)$, $\mathbf{R}_t$ and $\mathbf{R}_{r}$ are respectively termed as the transmit and receive correlation matrices, and both of them are positive semi-definite Hermitian matrices.
For the sake of simplicity, we assume that the correlation matrices follow the constraints as ${\rm tr}{\left(\mathbf R_{t}\right)}=N_t$ and ${\rm tr}{\left(\mathbf R_{r}\right)}=N_r$.
From the perspective of information theory for MIMO system in \cite{simon2006capacity}, the outage probability can be expressed as
\begin{equation}\label{eqn:out_def}
\begin{aligned}
p_{out}&= \Pr \left( {{{\log }_2}\det \left( {{{\bf{I}}_{N_r}} + {\rho}{\bf{H}}{{\bf{H}}^{\rm{H}}}} \right) < R} \right) \\
&= \Pr \left( G\triangleq{\prod\limits_{i = 1}^{{N_r}} {\left( {1 + {\rho}{\lambda _i}} \right)} } < {2^R} \right) = F_G(2^R),
\end{aligned}
\end{equation}
where $\rho ={P}/{(\sigma^{2}N_t)}$ stands for the average transmit SNR per antenna, $\lambda_1,\cdots,\lambda_{N_r}$ denotes the unordered eigenvalues of ${{\bf{H}}}{\bf{H}}^{\rm{H}}$\footnote{It is worth noting that ${{\bf{H}}}{\bf{H}}^{\rm{H}}$ is a singular matrix and has at least $(N_r-N_t)$ zero eigenvalues if ${N_t}<{N_r}$.}, $F_G(x)$ denotes the cumulative distribution function (CDF) of $G$, with the property \cite[Exercise 7.25, p167]{abadir2005matrix}, the outage probabilities for $N_t \ge N_r$ and $N_t < N_r$ can be derived in the same fashion. Hence, we assume $N_t \ge N_r$ in the sequel unless otherwise specified. From \eqref{eqn:out_def}, it boils down to determining the distribution of the product of multiple shifted eigenvalues $\bs{\lambda }=(\lambda_1,\cdots,\lambda_{N_r})$.
\section{Analysis of Outage Probability}\label{sec:ana}
The outage probability is the fundamental performance metric to characterize the error performance of decodings. However, the occurrence of the correlation among eigenvalues will yield the involvement of a multi-fold integral in deriving the expression of $F_G(2^R)$. Nonetheless, the product form of $G$ motivates us to apply Mellin transform to obtain the distribution of $G$ \cite{shi2017asymptotic}. Specifically, the Mellin transform of the probability density function (PDF) of $G$, $\left\{ {\mathcal M f_G} \right\}\left( s \right)$, is given by
\begin{equation}\label{eqn:mellin_G}
\begin{aligned}
\varphi(s)=\int\nolimits_0^\infty  { \cdots \int\nolimits_0^\infty  {\prod\limits_{i = 1}^{N_r} {{{\left( {1 + {\rho}{\lambda _i}} \right)}^{s - 1}}f\left({\bs{\lambda }}\right)d\lambda_1  \cdots d{\lambda _{N_r}}} } }.
\end{aligned}
\end{equation}
By utilizing the inverse Mellin transform together with its associated property of integration  \cite[eq.(8.3.15)]{debnath2010integral}, the CDF of $G$ can be obtained as
\begin{equation}\label{eqn:cdf_g_inverse}
\begin{aligned}
{F_G}\left( x \right) &=  \left\{{\mathcal M} ^{ - 1}{\left[ - \frac{1}{s}\varphi \left( {s + 1} \right)\right]}\right\}\left( x \right)\\
&=\frac{{  1}}{{2\pi {\rm{i}}}}\int\nolimits_{c - {\rm i}\infty }^{c + {\rm i}\infty } {\frac{{{x^{ - s}}}}{-s}\varphi \left( {s + 1} \right)ds},
\end{aligned}
\end{equation}
where $c \in (-\infty,0)$, because the Mellin transform of ${F_G}\left( x \right)$ exists for any complex number $s$ in the fundamental strip $-\infty <\Re{(s)} < 0$ by noticing ${F_G}\left( x \right)=0$ for $x < 1$ and $\lim\nolimits_{x\to \infty}{F_G}\left( x \right)=1$ \cite[p400]{szpankowski2010average}.
\vspace{-0.8em}
\subsection{Exact Outage Probability}\label{sec:full}
By favor of the character expansions, the joint distribution of the $N_r$ unordered strictly positive eigenvalues of $\mathbf{H}\mathbf{H}^\mathrm{H}$ is obtained by Ghaderipoor \emph{et al.} in \cite{ghaderipoor2012application} as
\begin{equation}\label{eqn:eig_pdf_full}
f\left( {\bs{\lambda }}\right) = \sum\nolimits_{{{\bf{k}}_{N_r}}} {\frac{{{{\left( { - 1} \right)}^{\frac{{{N_r}\left( {{N_r} - 1} \right)}}{2}}}\mathcal A}}{{{N_r}!\Delta \left( {\bf{K}} \right)}}\Delta \left( {\bs{\lambda }} \right) \det \left( {{{\left\{ {{\lambda _i}^{{k_j} + {N_t} - {N_r}}} \right\}}_{ {i,j} }}} \right)},
\end{equation}
where $\mathcal A= \frac{{{\prod\nolimits_{i = 1}^{N_r} {{a_i}^{N_t}} \prod\nolimits_{j = 1}^{N_t} {{b_j}^{N_r}} }}}{{{\Delta \left( {\bf{A}} \right)\Delta \left( {\bf{B}} \right)}}{\prod\nolimits_{j = 1}^{N_r} {\left( {{k_j} + {N_t} - {N_r}} \right)!} }}\det {( \{{{\left( { - {a_i}} \right)}^{{k_j}}}\}_{i,j} )}\times \det ( {{{\{ {{b_i}^{{k_j} + N - M}} \}}_{{i,1 \le j \le M}}},{{\{ {{b_i}^{N - j}} \}}_{ {i,M + 1 \le j \le N}}}})$, ${\bf a} =(a_1,\cdots,a_{N_r})$ and ${\bf b}  =(b_1,\cdots,b_{N_t})$ represent the eigenvalue vectors of ${{\bf R}_r}^{-1}$ and ${{\bf R}_t}^{-1}$, respectively, ${\bf k}_{N_r}=(k_1,\cdots,k_{N_r})$ stands for all irreducible representation of the general linear group $\mathrm{GL}({N_r},\mathbb C)$ and $k_1\ge\cdots\ge k_{N_r}$ are integers, ${\bf A}$, $\bf B$ and $\bf K$ are the diagonalizations of vectors $\bf a$, $\bf b$ and ${\bf k}_{N_r}$, respectively.

By substituting (\ref{eqn:eig_pdf_full}) into (\ref{eqn:mellin_G}), the Mellin transform of $f_G(x)$ is expressed as
\begin{equation}\label{eqn:mellin_fullc_tricomi}
\begin{aligned}
\varphi\left( s \right) &= \frac{{{{\left( { - 1} \right)}^{{N_r}\left( {{N_t} - {N_r}} \right)}}{{{{\rho }} }^{-\frac{1}{2}{{{N_r}\left( {{N_r} + 1} \right)}}}}\prod\nolimits_{i = 1}^{N_r} {{a_i}^{N_r}} \prod\nolimits_{j = 1}^{N_t} {{b_j}^{N_r}} }}{{\Delta \left( {\bf{A}} \right)\Delta \left( {\bf{B}} \right)\prod\nolimits_{i = 1}^{N_r} {{{\left( {s + i - 2} \right)}^{i - 1}}} }}\\
&\times\det \left( {\begin{array}{*{20}{c}}
{{{\left\{ {\Psi \left( {1,s + {N_r};\frac{{{a_i}{b_j}}}{\rho }} \right)} \right\}}_{1 \le i \le {N_r},j}}}\\
{{{\left\{ {{b_j}^{{N_t} - i}} \right\}}_{{N_r} + 1 \le i \le {N_t},j}}}
\end{array}} \right).
\end{aligned}
\end{equation}
where $\Psi \left( \cdot,\cdot;\cdot \right)$ denotes  confluent hypergeometric function \cite[eq. (9.210)]{gradshteyn1965table}.
\begin{proof}
Please see Appendix \ref{app:full_mellin}
\end{proof}
By substituting \eqref{eqn:mellin_fullc_tricomi} into \eqref{eqn:cdf_g_inverse}, the CDF of $G$ can then be obtained as shown in the following theorem.
\begin{theorem}\label{the:full_cdf}
The CDF of $G$ is given by
\begin{equation}
\begin{aligned}\label{eqn:F_G_CDF_fullcorr}
{F_G}\left( x \right) &= \Lambda{{\rho  }^{ \frac{1}{2}{{{N_r}\left( {{N_r} - 1} \right)}}}}\sum\limits_{{\bs{\sigma }} \in {S_{N_t}}}{\mathop{\rm sgn}} \left( {\bs{\sigma }} \right)\times \\
&{\prod\limits_{i = {N_r} + 1}^{N_t} {{b_{{\sigma _i}}}^{{N_t} + {N_r} - i}} }\underbrace{Y_{{N_r},2{N_r}}^{{N_r},{N_r}}\left[ {\left. {\begin{array}{*{20}{c}}
\mathbf{C}\\
\mathbf{D}
\end{array}} \right|x {\prod\limits_{i = 1}^{N_r} {\frac{{{a_i}{b_{{\sigma _i}}}}}{\rho }} }} \right]}_{\mathcal Y_{ {{\bs\sigma}}}(x)},
\end{aligned}
\end{equation}
where $\Lambda=\frac{{{{\left( { - 1} \right)}^{{N_r}\left( {{N_t} - {N_r}} \right) + \frac{1}{2}{{{N_r}\left( {{N_r} - 1} \right)}}}}}}{{\Delta \left( {\bf{A}} \right)\Delta \left( {\bf{B}} \right)}}$, ${\mathcal Y_{ {{\bs\sigma}}}(x)}$ is the generalized Fox's H function which is defined by using the integral of Mellin-Branes type and provided by an efficient MATHEMATICA implementation as \cite{yilmaz2010outage}, $\mathbf{C}=[{\left( {1,1,0,1} \right),{{\left( {{N_r} ,1,0,1} \right)}_{j = 1, \cdots ,{N_r}-1}}}]$, and $\mathbf{D}=[{{{( {{N_r},1,\frac{{{a_{{i}}}{b_{\sigma_i}}}}{\rho },1})}_{i = 1, \cdots ,{N_r}}},\left( {0,1,0,1} \right),{{\left( {j,1,0,1} \right)}_{j = 1, \cdots ,{N_r}-1}}}]$.
\begin{proof}
The proof is given in Appendix \ref{app:full_cdf}.
\end{proof}
\end{theorem}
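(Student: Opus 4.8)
The plan is to insert the Mellin transform \eqref{eqn:mellin_fullc_tricomi}, shifted to $\varphi(s+1)$, into the inverse--Mellin representation \eqref{eqn:cdf_g_inverse}, and then collapse the resulting contour integral into a finite weighted sum of generalized Fox's $H$ functions. The substitution $s\mapsto s+1$ turns each $\Psi(1,s+N_r;a_ib_j/\rho)$ into $\Psi(1,s+N_r+1;a_ib_j/\rho)$ and the factor $\prod_{i=1}^{N_r}(s+i-2)^{i-1}$ into $\prod_{i=1}^{N_r}(s+i-1)^{i-1}$; together with the $1/(-s)$ from \eqref{eqn:cdf_g_inverse} the integrand thus carries the rational kernel $-1/\big(s\prod_{i=1}^{N_r}(s+i-1)^{i-1}\big)$. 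Using $\Gamma(N_r+s)/\Gamma(j+s)=\prod_{\ell=j}^{N_r-1}(\ell+s)$ one checks that $\prod_{i=1}^{N_r}(s+i-1)^{i-1}=\Gamma(N_r+s)^{N_r-1}\big/\prod_{j=1}^{N_r-1}\Gamma(j+s)$, so this kernel is precisely the ratio of gamma functions encoded by the tuples $(1,1,0,1),(N_r,1,0,1)$ of $\mathbf C$ and $(0,1,0,1),(j,1,0,1)$ of $\mathbf D$.

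Next I would expand the $N_t\times N_t$ determinant in \eqref{eqn:mellin_fullc_tricomi} by the Leibniz rule over $\bs\sigma\in S_{N_t}$: the top $N_r$ rows contribute $\prod_{i=1}^{N_r}\Psi(1,s+N_r+1;a_ib_{\sigma_i}/\rho)$, the bottom $N_t-N_r$ rows contribute $\prod_{i=N_r+1}^{N_t}b_{\sigma_i}^{N_t-i}$, and the permutation parity appears as $\mathrm{sgn}(\bs\sigma)$. Being a finite sum, it moves outside the contour integral, leaving for each $\bs\sigma$ a single Mellin--Barnes integral with integrand $x^{-s}$ times the rational kernel above times $\prod_{i=1}^{N_r}\Psi(1,s+N_r+1;a_ib_{\sigma_i}/\rho)$, scaled by the $s$-independent constants $\rho^{-N_r(N_r+1)/2}$, $\prod_i a_i^{N_r}$, $\prod_j b_j^{N_r}$, $1/(\Delta(\mathbf A)\Delta(\mathbf B))$ and the sign $(-1)^{N_r(N_t-N_r)}$. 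Writing $\prod_{j=1}^{N_t}b_j^{N_r}=\prod_{i=1}^{N_r}b_{\sigma_i}^{N_r}\prod_{i=N_r+1}^{N_t}b_{\sigma_i}^{N_r}$ and pairing the second product with $\prod_{i=N_r+1}^{N_t}b_{\sigma_i}^{N_t-i}$ already yields the prefactor $\prod_{i=N_r+1}^{N_t}b_{\sigma_i}^{N_t+N_r-i}$ of \eqref{eqn:F_G_CDF_fullcorr}; the remaining constant powers of $a_i$, $b_{\sigma_i}$ and $\rho$ are to be absorbed into the generalized $H$ function — through the $c$-slots $a_ib_{\sigma_i}/\rho$ of its parameter list and the scalar argument $x\prod_i a_ib_{\sigma_i}/\rho$ — and into the prefactor $\Lambda\rho^{N_r(N_r-1)/2}$.

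The crux is the last step: identifying the per-$\bs\sigma$ integral with $\mathcal Y_{\bs\sigma}(x)=Y_{N_r,2N_r}^{N_r,N_r}\!\big[\mathbf C;\mathbf D\mid x\prod_i a_ib_{\sigma_i}/\rho\big]$. For this I would substitute the Mellin--Barnes (Meijer-$G$) representation of each Tricomi factor, schematically $\Psi(1,\gamma;z)=\frac{1}{\Gamma(2-\gamma)}\frac{1}{2\pi{\rm i}}\int_{L}\Gamma(-t)\,\Gamma(1-\gamma-t)\,\Gamma(1+t)\,z^{t}\,dt$ with $\gamma=s+N_r+1$ \cite{gradshteyn1965table}, and then perform the change of variables consolidating the $N_r$ inner contours with the outer $s$-contour, so that the individual arguments $a_ib_{\sigma_i}/\rho$ are carried by the $c$-slots of the tuples $(N_r,1,a_ib_{\sigma_i}/\rho,1)$ in $\mathbf D$ while a single scalar argument $x\prod_i a_ib_{\sigma_i}/\rho$ survives; the resulting $(N_r+1)$-fold Mellin--Barnes integral then coincides, term by term, with the integral of Mellin--Barnes type defining the generalized Fox's $H$ function of \cite{yilmaz2010outage}. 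I expect the main obstacle to be exactly this bookkeeping — verifying that the gamma numerators and denominators produced by the $\Psi$'s, by the rational kernel, and by $x^{-s}$ line up with the parameter lists $\mathbf C,\mathbf D$ (including the absorption of the leftover constant powers of $a_i,b_{\sigma_i},\rho$), that the contour $c\in(-\infty,0)$ fixed by the fundamental-strip discussion preceding \eqref{eqn:cdf_g_inverse} separates the two families of poles so the defining integral converges, and that the accumulated signs collapse to $(-1)^{N_r(N_t-N_r)+N_r(N_r-1)/2}$, i.e.\ to the constant $\Lambda$.
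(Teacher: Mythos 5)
Your first two steps track the paper's own proof: Appendix~\ref{app:full_cdf} likewise applies the Leibniz expansion to the determinant in \eqref{eqn:mellin_fullc_tricomi}, substitutes $\varphi(s+1)$ into \eqref{eqn:cdf_g_inverse}, and pulls the finite sum over $\bs\sigma\in S_{N_t}$ outside the contour integral; your bookkeeping of the rational kernel (writing $\prod_{i=1}^{N_r}(s+i-1)^{i-1}$ as a ratio of gamma functions) and of the powers of $b_{\sigma_i}$ is correct and in fact more explicit than the paper's.

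The gap is in your ``crux'' step. The generalized Fox's H function of \cite{yilmaz2010outage} is \emph{not} defined by a multi-fold Mellin--Barnes integral with only gamma functions in the integrand; it is defined by a \emph{single} Mellin--Barnes integral whose integrand is a ratio of products of the functions $\Xi\left( {a,\alpha ,A,\varphi } \right) = {A^{\varphi  + a + \alpha s - 1}}\Psi \left( {\varphi ,\varphi  + a + \alpha s;A} \right)$, each tuple in $\mathbf C$ and $\mathbf D$ specifying one such factor (with $A=0$ the factor degenerates to an ordinary gamma function, which is what the tuples $(1,1,0,1)$, $(N_r,1,0,1)$, $(0,1,0,1)$, $(j,1,0,1)$ encode, while the tuples $({N_r},1,{a_i b_{\sigma_i}}/{\rho},1)$ encode the Tricomi factors directly). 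Consequently the per-$\bs\sigma$ contour integral you obtain after the Leibniz expansion is \emph{already} in the defining form, and the identification is immediate: the prefactors $A^{\varphi+a+\alpha s-1}$ hidden in the $\Xi$'s are precisely what absorb the leftover constant powers of $a_i b_{\sigma_i}/\rho$ and convert the argument from $x$ to $x\prod_{i}a_i b_{\sigma_i}/\rho$. Your proposed alternative --- replacing each $\Psi$ by its own Mellin--Barnes representation and then ``consolidating the $N_r$ inner contours with the outer $s$-contour'' --- does not work as described: expanding the $N_r$ Tricomi factors produces a genuinely $(N_r+1)$-dimensional Mellin--Barnes integral (a multivariable H-type function), and there is no change of variables that merges $N_r+1$ independent integration variables into one so that ``a single scalar argument survives.'' So either you must keep the $\Psi$'s intact and invoke the $\Xi$-based definition as the paper does, or you would have to prove a separate (and false, as stated) reduction lemma; as written the final identification does not go through.
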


Substituting \eqref{eqn:F_G_CDF_fullcorr} into  \eqref{eqn:out_def}, the outage probability under fully correlated Rayleigh MIMO channels can be obtained.

The generalized Fox's H function involved in \eqref{eqn:F_G_CDF_fullcorr}  is too complex to extract insightful results. In order to obtain tractable results and gain more insights, we have to recourse to the asymptotic analysis of the outage probability at high SNR.
\vspace{-0.8em}
\subsection{Asymptotic Outage Probability}
In order to derive the asymptotic expression at high SNR for the outage probability in \eqref{eqn:F_G_CDF_fullcorr}, the following lemma associated with the asymptotic expression of  ${\mathcal Y_{ {{\bs\sigma}}}(x)}$ is developed first.
\begin{lemma}\label{the:asy_y_3}
As $\rho \to \infty$, ${\mathcal Y_{ {{\bs\sigma}}}(x)}$ is asymptotic to
\begin{equation}
\begin{aligned}\label{eqn:mellin_cdf_g_fullcsers_exp}
{\mathcal Y_{ {{\bs\sigma}}}(x)} &= {{{{\rho }}}^{-{N_r}^2}}\prod\limits_{i = 1}^{N_r} {{a_i}^{N_r}}{{b_{\sigma_i}}^{N_r}}\frac{1}{{2\pi {\rm{i}}}}\int\nolimits_{-c - {\rm i}\infty }^{-c + {\rm i}\infty } \frac{{\Gamma \left( s \right)}}{{\Gamma \left( {1 + s} \right)}}\\
 &\times\prod\limits_{i = 1}^{N_r} {\frac{{\Gamma \left( {s - {N_r}} \right)}}{{\Gamma \left( {s - i + 1} \right)}}}  {\sum\limits_{{n_{{i}}} = 0}^\infty  {\frac{{{{\left( {\frac{{{a_{{ i}}}{b_{\sigma_i}}}}{\rho }} \right)}^{{n_{{i}}}}}}}{{{{\left( { 1 + {N_r}  - s} \right)}_{{n_{{i}}}}}}}} {x^s}ds}\\
&+ o\left( {{\rho ^{ - {N_t}{N_r}-\frac{1}{2}N_r(N_r+1)}}} \right).
\end{aligned}
\end{equation}
\end{lemma}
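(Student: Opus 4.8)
The plan is to start from the Mellin--Barnes representation of the generalized Fox's $H$ function $\mathcal Y_{\bs\sigma}(x)$ as given by its definition via the parameter lists $\mathbf C$ and $\mathbf D$, and then expand each of the ``generalized'' Gamma-type factors — those carrying the extra arguments $a_i b_{\sigma_i}/\rho$ — into a confluent-hypergeometric series. Concretely, each entry of the form $(N_r,1,a_i b_{\sigma_i}/\rho,1)$ in $\mathbf D$ contributes, inside the Mellin--Barnes integrand, a factor whose series expansion is $\Gamma(s-N_r)\sum_{n_i=0}^{\infty}\frac{(a_i b_{\sigma_i}/\rho)^{n_i}}{(1+N_r-s)_{n_i}}$, which is exactly the shape appearing in \eqref{eqn:mellin_cdf_g_fullcsers_exp}. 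After collecting the ordinary Gamma factors coming from $\mathbf C$ and from the non-generalized entries of $\mathbf D$ into the ratio $\frac{\Gamma(s)}{\Gamma(1+s)}\prod_{i=1}^{N_r}\frac{\Gamma(s-N_r)}{\Gamma(s-i+1)}$, and performing the change of variable that moves the contour from the ``$c$'' line to the ``$-c$'' line (which also accounts for the prefactor $\rho^{-N_r^2}\prod_i a_i^{N_r}b_{\sigma_i}^{N_r}$ pulled out front), one obtains the stated leading term.

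The key steps, in order, are: (i) write $\mathcal Y_{\bs\sigma}(x)$ explicitly as a contour integral $\frac{1}{2\pi{\rm i}}\int \Theta(s)\,\big(x\prod_i a_i b_{\sigma_i}/\rho\big)^{-s}\,ds$ with $\Theta(s)$ the ratio of products of Gamma functions and generalized Gamma functions dictated by $\mathbf C,\mathbf D$; (ii) isolate the $N_r$ generalized factors and substitute their defining series, justifying the interchange of summation and integration on the contour (absolute convergence for $\rho$ large, since the series is dominated termwise by a convergent one once $\Re(s)$ is fixed on the contour); (iii) absorb the factor $\prod_i (a_i b_{\sigma_i}/\rho)^{-s}$ together with the $\rho$-powers so that the integrand takes the normalized form with $x^s$ and the clean Gamma ratio shown in the lemma, extracting the overall constant $\rho^{-N_r^2}\prod_i a_i^{N_r} b_{\sigma_i}^{N_r}$; and (iv) bound the tail: split each inner series into its $n_i=0$ contribution plus a remainder of order $\rho^{-1}$, multiply out the $N_r$ such splittings, and show that every cross-term other than the one retained is $O\big(\rho^{-N_t N_r-\frac12 N_r(N_r+1)}\big)$ after accounting for the explicit $\rho^{-N_r^2}$ prefactor and the degree shift $N_t-N_r$ from the surviving determinant rows $\{b_{\sigma_i}^{N_t+N_r-i}\}$; this last bookkeeping is what produces the exponent $-N_tN_r-\tfrac12N_r(N_r+1)$ in the $o(\cdot)$ term.

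The main obstacle I expect is step (iv), the error-order bookkeeping: one must carefully track how the $N_r$-fold product of series interacts with the Gamma ratio $\prod_i \Gamma(s-N_r)/\Gamma(s-i+1)$ and with the residue structure of $\frac{\Gamma(s)}{\Gamma(1+s)}=\frac1s$ so that, upon eventually closing the contour (done in the next stage of the paper, not here), the claimed power of $\rho$ is genuinely the dominant one and no retained term is secretly of lower order. A secondary technical point is the legitimacy of the termwise interchange in step (ii): since $\Psi(1,s+N_r;z)$ (equivalently the generalized Gamma factor) has only an asymptotic, not convergent, series in general, one should instead invoke the \emph{convergent} Kummer-type series $\sum_{n}\frac{z^n}{(1+N_r-s)_n}$ valid for all finite $z$ — which is exactly the form written in \eqref{eqn:mellin_cdf_g_fullcsers_exp} — and argue uniform convergence on compact pieces of the vertical contour plus a standard tail estimate for the Gamma factors to push the interchange through.
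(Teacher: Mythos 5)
Your overall route is the paper's route: Appendix~C likewise starts from the Mellin--Barnes representation of $\mathcal Y_{\bs\sigma}(x)$, reflects the contour from $c$ to $-c$ (Property~2 of \cite{shi2017asymptotic}, which is exactly what produces the prefactor $\rho^{-N_r^2}\prod_i a_i^{N_r}b_{\sigma_i}^{N_r}$ and the clean Gamma ratio), and then expands the confluent-hypergeometric factors. The genuine gap is in how you generate the series and, as a consequence, where you locate the error term. The generalized entry $(N_r,1,a_ib_{\sigma_i}/\rho,1)$ of $\mathbf D$ is a Tricomi function $\Psi(1,1+N_r-s;z)$ with $z=a_ib_{\sigma_i}/\rho$; it is \emph{not} identically equal to the single convergent series you quote. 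By the connection formula \cite[eq.~(9.210.2)]{gradshteyn1965table},
\begin{equation*}
\Psi\!\left(1,1+N_r-s;z\right)=\frac{\Gamma\!\left(s-N_r\right)}{\Gamma\!\left(s-N_r+1\right)}\sum_{n\ge 0}\frac{z^{n}}{\left(1+N_r-s\right)_{n}}+\Gamma\!\left(N_r-s\right)z^{\,s-N_r}e^{z},
\end{equation*}
and only the first branch is what survives in \eqref{eqn:mellin_cdf_g_fullcsers_exp}. The $o\bigl(\rho^{-N_tN_r-\frac12 N_r(N_r+1)}\bigr)$ term of the lemma is precisely the contribution of the discarded second branch, whose magnitude on the contour is governed by $\rho^{-(\Re(s)-N_r)}$ and is therefore controlled only by placing the contour far enough to the right, i.e.\ choosing $c<-N_tN_r+\tfrac12N_r(N_r-1)$ as the paper does --- a choice your proposal never makes and without which the claimed order cannot be established. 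If the generalized factor really were given exactly by your ``defining series,'' the lemma would be an identity and no $o(\cdot)$ term would be needed at all; the presence of that term should have signalled that an additive piece is being dropped.

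Relatedly, your step~(iv) aims at the wrong source of error: the lemma retains the \emph{entire} sums $\sum_{n_i=0}^{\infty}$, so splitting off the $n_i=0$ term and bounding the remainder is not what produces the stated remainder here (that truncation to the dominant multi-indices $\mathbf n\in\Omega_{N_r}$ is carried out later, in the proof of Theorem~2). Your final remark that ``$\Psi$ has only an asymptotic series'' also conflates the large-argument expansion of $\Psi$ with the situation at hand, where $z\to 0$; the relevant tool is the small-argument connection formula above, after which the retained Kummer series converges for all finite $z$ and the sum--integral interchange is routine, as you say. Repairing the argument amounts to (a) replacing the asserted ``defining series'' by the two-branch decomposition, (b) fixing the contour so that the second branch, together with the $\rho^{-N_r^2}$ prefactor, is of the claimed higher order, and (c) deleting step~(iv) from this lemma.
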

\begin{proof}
Please see Appendix \ref{app:asy_y_3}.
\end{proof}

By using Lemma \ref{the:asy_y_3}, the asymptotic expression of the outage probability is given by the following theorem.
\begin{theorem}\label{the:asy_out_full}
At high transmit SNR, the outage probability asymptotically equals
\begin{align}\label{eqn:F_G_mellin_trans_detzerofina}
p_{out} = \frac{{{{{{\rho }} }^{-{N_r}{N_t}}}}}{{\det \left( {{{{\bf{R}}_r}}} \right)^{N_t}\det \left( {{{{\bf{R}}_t}}} \right)^{N_r}}}g(R)+ o\left( {{\rho ^{ - {N_t}{N_r}}}} \right),
\end{align}
where \resizebox{0.7\hsize}{!}{$g(R) = G_{{N_r} + 1,{N_r} + 1}^{0,{N_r} + 1}\left( {\left. {\begin{array}{*{20}{c}}
{1,{N_t} + 1 , \cdots ,{N_t} + {N_r}}\\
{0,1, \cdots ,{N_r}}
\end{array}} \right|2^R} \right)$}
 denotes the Meijer G-function\cite{gradshteyn1965table}.
\end{theorem}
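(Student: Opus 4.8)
The plan is to substitute the high-SNR expansion of $\mathcal Y_{\bs\sigma}(x)$ from Lemma~\ref{the:asy_y_3} into the exact CDF \eqref{eqn:F_G_CDF_fullcorr} of Theorem~\ref{the:full_cdf} and extract the term of order $\rho^{-N_rN_t}$. After the substitution, $F_G(x)$ becomes $\Lambda\rho^{\frac12 N_r(N_r-1)-N_r^2}\prod_{i=1}^{N_r}a_i^{N_r}$ times a Mellin--Barnes integral in which, for each $i$, an $i$-dependent factor $\Gamma(s-N_r)/\Gamma(s-i+1)$ is multiplied by a power series in $1/\rho$; expanding these nested series, the summand with indices $(n_1,\dots,n_{N_r})$ carries the monomial $\prod_{i=1}^{N_r}b_{\sigma_i}^{N_r+n_i}\prod_{i=N_r+1}^{N_t}b_{\sigma_i}^{N_t+N_r-i}$ together with the scalar $\rho^{-\sum_i n_i}\prod_i a_i^{n_i}\big/\prod_i(1+N_r-s)_{n_i}$. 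The first key observation is that, for fixed $(n_i)$, the sum over $\bs\sigma\in S_{N_t}$ is an $N_t\times N_t$ determinant whose row exponents are $\{N_r+n_i\}_{i\le N_r}$ together with $\{N_r,N_r+1,\dots,N_t-1\}$; it vanishes unless all $N_t$ exponents are pairwise distinct. Since $n_i\ge0$ this forces $\{N_r+n_i\}_{i}\subset\{N_t,N_t+1,\dots\}$, and the assignment minimizing $\sum_i n_i$ — hence giving the dominant power of $\rho$ — is $(n_i)$ running over a permutation of $\{N_t-N_r,\dots,N_t-1\}$, for which $\sum_i n_i=N_rN_t-\frac12 N_r(N_r+1)$. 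Combined with the $o(\cdot)$ remainder of Lemma~\ref{the:asy_y_3} (which is $o(\rho^{-N_rN_t})$ once the $\rho^{\frac12N_r(N_r-1)}$ prefactor is included), the surviving exponent of $\rho$ is $\frac12 N_r(N_r-1)-N_r^2-\bigl(N_rN_t-\frac12 N_r(N_r+1)\bigr)=-N_rN_t$.

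The next step is to carry out the two permutation sums. For the dominant indices the $b$-exponents run over the $N_t$ consecutive integers $N_r,N_r+1,\dots,N_t+N_r-1$, so $\sum_{\bs\sigma}\operatorname{sgn}(\bs\sigma)(\cdots)=\pm\bigl(\prod_{j=1}^{N_t}b_j^{N_r}\bigr)\Delta(\mathbf B)$; likewise, summing over the $N_r!$ assignments of $\{N_t-N_r,\dots,N_t-1\}$ to the indices $i=1,\dots,N_r$ — with $\Gamma(s-N_r)/\Gamma(s-i+1)$ and $a_i^{n_i}$ relabelled accordingly, while $\prod_i(1+N_r-s)_{n_i}=\prod_{k=N_t-N_r}^{N_t-1}(1+N_r-s)_k$ is independent of the assignment — produces $\pm\bigl(\prod_{i=1}^{N_r}a_i^{N_t-N_r}\bigr)\Delta(\mathbf A)$. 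The Vandermonde factors $\Delta(\mathbf A)\Delta(\mathbf B)$ cancel the denominator of $\Lambda$, and the remaining eigenvalue powers combine as $\prod_{i=1}^{N_r}a_i^{N_t}=\det(\mathbf R_r)^{-N_t}$ and $\prod_{j=1}^{N_t}b_j^{N_r}=\det(\mathbf R_t)^{-N_r}$, reproducing the prefactor $\rho^{-N_rN_t}\big/\bigl(\det(\mathbf R_r)^{N_t}\det(\mathbf R_t)^{N_r}\bigr)$ of \eqref{eqn:F_G_mellin_trans_detzerofina}.

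It then remains to match the leftover contour integral
\[
\frac{1}{2\pi{\rm i}}\int\frac{\Gamma(s)}{\Gamma(1+s)}\,\frac{[\Gamma(s-N_r)]^{N_r}}{\prod_{i=1}^{N_r}\Gamma(s-i+1)}\,\frac{x^{s}}{\prod_{k=N_t-N_r}^{N_t-1}(1+N_r-s)_k}\,ds,
\]
evaluated at $x=2^R$, with the Meijer $G$-function $g(R)$ in the statement. Rewriting $\prod_{k=N_t-N_r}^{N_t-1}(1+N_r-s)_k=[\Gamma(1+N_r-s)]^{-N_r}\prod_{m=1}^{N_r}\Gamma(N_t+m-s)$ and applying the reflection formula in the forms $\Gamma(s-N_r)\Gamma(1+N_r-s)=(-1)^{N_r}\Gamma(s)\Gamma(1-s)$ and $\Gamma(s-N_t-m+1)\Gamma(N_t+m-s)=(-1)^{N_t+m-1}\Gamma(s)\Gamma(1-s)$, the integrand collapses, up to an overall sign, to $\frac{\Gamma(s)\prod_{m=1}^{N_r}\Gamma(s-N_t-m+1)}{\Gamma(1+s)\prod_{i=1}^{N_r}\Gamma(s-i+1)}x^{s}$, which is exactly the Mellin--Barnes integral defining $g(R)=G_{N_r+1,N_r+1}^{0,N_r+1}(\,\cdot\,)$. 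Checking that the signs accumulated from $\Lambda$, from reordering the determinant rows, and from the two reflection formulas combine to $+1$, and invoking $p_{out}=F_G(2^R)$ from \eqref{eqn:out_def}, completes the proof.

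The main obstacle is the bookkeeping in the first two paragraphs: seeing that the $1/\rho$-expansion of $\mathcal Y_{\bs\sigma}$ couples with the $S_{N_t}$-sum so that exactly one term of each nested series survives at leading order, re-indexing the $i$-dependent Gamma factors consistently with that surviving term, and folding the two permutation sums into $\Delta(\mathbf A)\Delta(\mathbf B)$ without sign errors. The closing Gamma-function manipulation, though somewhat lengthy, is routine once the reflection formula is brought in.
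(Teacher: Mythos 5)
Your proposal is correct and follows essentially the same route as the paper's Appendix D: substitute the Lemma~\ref{the:asy_y_3} expansion into \eqref{eqn:F_G_CDF_fullcorr}, observe that the $S_{N_t}$-sum is a determinant that vanishes unless the series indices $(n_1,\dots,n_{N_r})$ form a permutation of $\{N_t-N_r,\dots,N_t-1\}$, collapse the two permutation sums into $\Delta(\mathbf A)\Delta(\mathbf B)$ (cancelling $\Lambda$'s denominator), and identify the surviving Mellin--Barnes integral with the stated Meijer $G$-function. The only cosmetic difference is that you convert the Pochhammer factors to the Meijer-$G$ integrand explicitly via reflection formulas at the end, whereas the paper writes $\zeta(N_r)$ as a sum of Meijer $G$-functions up front; both are sound.
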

\begin{proof}
 Please see Appendix \ref{app:asy_out_full}.
\end{proof}
It is worth mentioning that the asymptotic analysis of the outage probability for $N_t < N_r$ can be carried out in an analogous way owing to the property \cite[Exercise 7.25, p167]{abadir2005matrix}. Similar to \eqref{eqn:F_G_CDF_fullcorr} and \eqref{eqn:F_G_mellin_trans_detzerofina}, the exact and asymptotic outage expressions for $N_t < N_r$ can be obtained by directly interchanging ${\bf R}_t$ and $N_t$ with ${\bf R}_r$ and $N_r$, respectively.
\section{Discussions of Asymptotic Results}\label{sec:asym_res}
The asymptotic outage probabilities commonly exhibit the same basic mathematical structure as \cite[eq.(3.158)]{tse2005fundamentals} 
\begin{equation}\label{eqn:gen_out_asy}
p_{out}  =  \mathcal S ({\bf R}_t,{\bf R}_r) {\left( {{\mathcal C(R)} } \rho \right)^{ - d}} + o\left( {{\rho ^{ - d}}} \right),
\end{equation}
where $\mathcal S({\bf R}_t,{\bf R}_r)$ quantifies the impact of spatial correlation at transmit and receive sides, $\mathcal C(R)$ is the modulation and coding gain, and $d$ stands for the diversity order. By identifying \eqref{eqn:F_G_mellin_trans_detzerofina} with \eqref{eqn:gen_out_asy}, we get
\begin{equation}\label{eqn:spa_imp_exp}
\mathcal S({\bf R}_t,{\bf R}_r) = \frac{1}{{\det \left( {{{{\bf{R}}_r}}} \right)^{N_t}\det \left( {{{{\bf{R}}_t}}} \right)^{N_r}}},
\end{equation}
\vspace{-1em}
\begin{equation}\label{eqn:cm_imp_exp}
\mathcal C(R) = {\left({g_{\bf{0}}}(R)\right)}^{-\frac{1}{N_tN_r}},
\end{equation}
and $d=N_tN_r$, respectively.
%
To comprehensively understand the asymptotic behavior of the outage probability, these impact factors are discussed individually.
\subsection{Diversity Order}
The terminology of the diversity order can be used to measure the degree of freedom of communication systems, which is defined as the ratio of the outage probability to the transmit SNR on a log-log scale as
\begin{equation}\label{eqn:d_def}
 d = \mathop {\lim }\limits_{\rho  \to \infty } \frac{{\log {p_{out}}}}{{\log \rho }}.
\end{equation}
Hence, the diversity order indicates the decaying speed of the outage probability with respect to the transmit SNR. As shown in \eqref{eqn:F_G_mellin_trans_detzerofina},
full diversity can be achieved by MIMO systems regardless of the presence of the spatial correlation, i.e., $d=N_tN_r$.
\vspace{-0.8em}
\subsection{Modulation and Coding Gain}\label{sec:mc_gain}
The modulation and coding gain $\mathcal C(R)$ quantifies the amount of the SNR reduction required to reach the same outage probability when employing a certain modulation and coding scheme (MCS).
Accordingly, the increase of $\mathcal C(R)$ is in favor of the improvement of the outage performance. From \eqref{eqn:cm_imp_exp}, in order to disclose the behaviour of $\mathcal C(R)$, it suffices to investigate the property of
the function ${g_{\bf{0}}}(R)$. Towards this end, we can arrive at following Theorem.
\begin{theorem}\label{the:conv_g0}
${g_{\bf{0}}}(R)$ is a monotonically increasing and convex function of the transmission rate $R$.
\end{theorem}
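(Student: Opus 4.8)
The plan is to work directly with the Meijer $G$-function representation of $g_{\mathbf 0}(R)$ and recognize that, after the substitution $t = 2^R$, the function $g_{\mathbf 0}$ is (up to the trivial increasing reparametrization $R \mapsto 2^R$) an integral whose integrand is manifestly positive and log-convex in $\log t$. Concretely, I would first write $g_{\mathbf 0}(R)$ via its defining Mellin--Barnes integral,
\begin{equation*}
g_{\mathbf 0}(R) = \frac{1}{2\pi{\rm i}}\int_{\mathcal L} \frac{\prod_{i=1}^{N_r}\Gamma(s-i+1)}{\Gamma(1+s)\prod_{i=1}^{N_r}\Gamma(s-N_r)\big/\cdots}\, 2^{-sR}\,ds,
\end{equation*}
but more usefully I would convert it to a real integral over the support $[1,\infty)$ (or $[0,1]$, depending on the contour) by residue summation or by identifying $g_{\mathbf 0}(R)$ as a tail probability: from the derivation of Theorem~\ref{the:asy_out_full}, $g_{\mathbf 0}(R)$ arises as the leading coefficient of $F_G(2^R)$, i.e. it is proportional to $\int_{1}^{2^R} h(u)\,du$ for a nonnegative density $h$ supported on $u\ge 1$. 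Monotonicity in $R$ is then immediate because $2^R$ is increasing in $R$ and the integrand $h$ is nonnegative, so $\frac{d}{dR}g_{\mathbf 0}(R) = h(2^R)\,2^R\ln 2 \ge 0$.

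For convexity, I would differentiate once more. With $g_{\mathbf 0}(R) = c\int_1^{2^R} h(u)\,du$ we get $g_{\mathbf 0}'(R) = c\,\ln 2\, \cdot 2^R h(2^R)$, and hence
\begin{equation*}
g_{\mathbf 0}''(R) = c(\ln 2)^2\, 2^R\Big( h(2^R) + 2^R h'(2^R)\Big) = c(\ln 2)^2\, 2^R\,\frac{d}{d(\ln u)}\big(u\,h(u)\big)\Big|_{u=2^R}.
\end{equation*}
So it suffices to show that $u\,h(u)$ is nondecreasing in $\log u$ on $u\ge 1$, equivalently that $v\mapsto e^{2v} h(e^v)$ is nondecreasing, equivalently that $u^2 h(u)$ is nondecreasing in $u$. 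The explicit form of $h$ is needed here: from the Meijer $G$ / Mellin--Barnes data in Theorem~\ref{the:asy_out_full}, $h(u)$ (the inverse Mellin transform of the ratio of Gamma products with the $\Gamma(s-N_r)^{N_r}$ block) is a positive combination of terms like $u^{-1}(\ln u)^{k}$ for $k \le N_r-1$, each multiplied by positive constants; on $u \ge 1$ the map $u \mapsto u^2 \cdot u^{-1}(\ln u)^k = u(\ln u)^k$ is increasing, which gives the claim. Alternatively, and more robustly, I would prove log-convexity of $g_{\mathbf 0}$ in the variable $2^R$ directly from the Mellin--Barnes integral: $g_{\mathbf 0}$ as a function of $x=2^R$ is a moment-type integral $\int x^{s}\,d\mu(s)$ against a positive measure $\mu$ (the residues all have a common sign here because the $\Gamma$-products evaluated at the poles $s=0,1,\dots,N_r-1$ produce consistent signs), and any such integral is log-convex in $\log x$ by Hölder's inequality; composing with the increasing convex map $R\mapsto 2^R$ and noting that a log-convex, increasing function of an increasing convex argument is convex completes the argument.

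The main obstacle I anticipate is verifying the sign/positivity claim: namely that the inverse Mellin transform $h$ underlying $g_{\mathbf 0}$ is genuinely a nonnegative function (so that $g_{\mathbf 0}$ is a bona fide tail integral of a positive density), and that $u^2 h(u)$ — equivalently the residue coefficients of $u(\ln u)^k$ — are nonnegative on the relevant range. This amounts to a careful bookkeeping of the residues of $\frac{\Gamma(s)}{\Gamma(1+s)}\prod_{i=1}^{N_r}\frac{\Gamma(s-N_r)}{\Gamma(s-i+1)}$ at $s = 0,1,\dots,N_r-1$ (a pole of order up to $N_r$ at the confluent points), keeping track of how the $\Gamma(s-N_r)$ factors, which are negative/alternating away from their own poles, combine. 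One clean way to sidestep the messy residue computation is to appeal to the probabilistic origin: $g_{\mathbf 0}(R)$ is the leading-order term of an outage probability (a CDF), hence automatically nondecreasing in $R$; and its convexity can be inherited from the fact that the limiting random variable $\rho^{N_tN_r}\big(G-1\big)$ (or the appropriate normalization) has a density that is monotone on its support, a structural property of products of the small-$\rho$ expansions of $1+\rho\lambda_i$. I would present both the direct Meijer-$G$ computation and this probabilistic shortcut, using whichever makes the positivity transparent.
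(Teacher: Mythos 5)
The paper offers no actual proof of this theorem---it defers entirely to \cite[Lemma 4]{shi2017asymptotic}---so there is nothing to compare step by step; judging your argument on its own terms, the monotonicity half is sound but the convexity half has a genuine gap. Monotonicity is fine: $g_{\bf 0}(R)$ is (up to a positive constant) the limit of $\rho^{N_tN_r}F_G(2^R)$, a pointwise limit of nondecreasing functions of $R$, hence nondecreasing. For convexity, note first a slip: from $g_{\bf 0}(R)=c\int_1^{2^R}h(u)\,du$ you need $u\,h(u)$ nondecreasing in $u$, not $u^2h(u)$ --- the latter is strictly weaker and does not give $g_{\bf 0}''\ge 0$. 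More seriously, both of your justifications of the required sign/monotonicity structure fail already for $N_t=N_r=1$, where the Meijer $G$-function collapses to $g(R)=2^R-1$: the Mellin--Barnes integrand $\frac{1}{s(s-1)}x^s$ has poles at $s=0$ \emph{and} $s=1$ (in general the poles run up to $s=N_t+N_r-1$, not only $s=0,\dots,N_r-1$) with residues of \emph{opposite} signs ($+x$ and $-1$), so $g$ is not a moment integral against a positive measure; correspondingly $\log(2^R-1)$ is concave, not convex, in $R$, so the ``log-convex in $\log x$ plus H\"older'' shortcut is simply false. Likewise $h(u)=1$ here (and $h(u)\propto u-1$ for $N_t=2$, $N_r=1$), not a positive combination of $u^{-1}(\ln u)^k$. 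The residue bookkeeping you flag as the main obstacle is therefore not merely messy: the positivity claims you would need are incorrect as stated, and the argument does not close.

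A route that does work (and is presumably what the cited lemma does) avoids residues entirely. The leading-order coefficient is an integral of a nonnegative weight over the region $\{\prod_i(1+x_i)<2^R,\ x_i\ge 0\}$; substituting $y_i=\ln(1+x_i)$ turns the domain into the simplex $\{\sum_i y_i<R\ln 2,\ y_i\ge 0\}$ with a new nonnegative integrand $w(\mathbf y)$ that is coordinatewise nondecreasing. The translation $\mathbf y\mapsto\mathbf y+\frac{\delta}{N_r}\mathbf 1$ maps the slice $\{\sum_i y_i=t,\ y_i\ge 0\}$ isometrically into the slice at $t+\delta$ while increasing $w$ pointwise, so the derivative of $g_{\bf 0}$ with respect to $R$ is nonnegative and nondecreasing; increasingness and convexity follow at once. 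If you want to salvage your Mellin--Barnes version, you must prove directly that $u\,h(u)$ is nondecreasing on $u\ge 1$ (equivalently, that the slice integral above is nondecreasing in $t$), rather than asserting positivity of the individual residues.
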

\begin{proof}
Similar to \cite[Lemma 4]{shi2017asymptotic}, the detailed proof is omitted here to save space.
\end{proof}
Evidently from Theorem \ref{the:conv_g0}, the transmission rate is an increasing and convex function of the asymptotic outage probability. Without dispute, the monotonicity and convexity of ${g_{\bf{0}}}(R)$ can greatly facilitate the optimal rate selection of MIMO systems if the asymptotic results are used.
\vspace{-0.8em}
\subsection{Spatial Correlation}\label{sec:scorr}
Although the effect of the spatial correlation can be quantified by $\mathcal S({\bf R}_t,{\bf R}_r)$, it is also imperative to draw a qualitative conclusion about the outage behaviour of the spatial correlation.
To characterize the spatial correlation, the majorization theory is usually adopted as a powerful mathematical tool to establish a tractable framework \cite{shin2008mimo,feng2018impact}. The majorization-based correlation model is defined as follows.

\begin{definition}\label{Definition1}
For two $N \times N$ semidefinite positive matrices $\mathbf{R}_{1}$ and $\mathbf{R}_{2}$, ${\bf r}_1=(r_{1,1},\cdots,r_{1,{N}})$ and ${\bf r}_2=(r_{2,1},\cdots,r_{2,{N}})$ are defined as the vectors of the eigenvalues of $\mathbf{R}_{1}$ and $\mathbf{R}_{2}$, respectively, where the eigenvalues are arranged in descending order as ${r_{i,1}} \ge \cdots  \ge {r_{i,{N}}},i \in \left\{ {1,2} \right\}$. We denote $\mathbf{R}_{1} \preceq \mathbf{R}_{2}$ and say the matrix $\mathbf{R}_{1} $ is majorized by the matrix $ \mathbf{R}_{2}$ if

\begin{equation}
\resizebox{0.85\hsize}{!}{$
\sum\limits_{j = 1}^k {{r_{1,j}}}  \le \sum\limits_{j = 1}^k {{r_{2,j}}},\,\left( {k\leq{N} - 1} \right)  \quad{\rm and}\quad \sum\limits_{j = 1}^{N} {{r_{1,j}}}  = \sum\limits_{j = 1}^{N} {{r_{2,j}}} .$}
\end{equation}
We also say ${\mathbf{R}}_{1} $ is more correlated than $ \mathbf{R}_{2}$.
\end{definition}
It is easily found by definition that ${\rm diag}(1,1,\cdots,1) \preceq \mathbf{R}_i \preceq {\rm diag}({N},0,\cdots,0)$ if ${\rm tr}(\mathbf{R}_i) = N$, where ${\rm diag}({N_t},0,\cdots,0)$ and ${\rm diag}(1,1,\cdots,1)$ correspond to completely correlated and independent cases, respectively. Notice that $\det(\mathbf{R}_{i})=\prod\nolimits_{j=1}^{N}r_{i,j}$, the property of the majorization in \cite[F.1.a]{marshall1979inequalities} proves that the determinant of the correlation matrix is a Schur-concave function, where $\det(\mathbf{R}_{1}) \ge \det(\mathbf{R}_{2})$ if $\mathbf{R}_{1} \preceq \mathbf{R}_{2}$, the interested reader is referred to \cite{marshall1979inequalities} for further details regarding the schur monotonicity. By recalling $\mathcal S({\bf R}_t,{\bf R}_r)$ is the composition of the determinants and using the fact associated with the composition involving Schur-concave functions \cite{marshall1979inequalities}, we arrive at
\begin{small}
\begin{equation}\label{eqn:}
\mathcal S({\bf R}_{t_1},{\bf R}_{r_1}) \le \mathcal S({\bf R}_{t_2},{\bf R}_{r_2}),
\end{equation}
\end{small}
whenever ${\bf R}_{t_1} \preceq {\bf R}_{t_2}$  and $ {\bf R}_{r_1} \preceq {\bf R}_{r_2}$. As a consequence, it is concluded that the presence of the spatial correlation adversely impacts the outage performance.
\section{Numerical Results}\label{sec:num}
In this section, numerical results are presented for verifications and discussions. For notational convenience, we define $\mathbf t$ and $\mathbf r$ as the row vectors of the eigenvalues of the transmit and receive correlation matrices, i.e., $\mathbf R_t$ and $\mathbf R_r$.
\vspace{-0.8em}
\subsection{Verifications}
Figs. \ref{fig:out_Nt3}  depicts the outage probability versus the transmit SNR. The labels `Sim.', `Exa.' and `Asy.' in this figure indicates the simulated, exact and asymptotic outage probabilities, respectively. As observed in fig. \ref{fig:out_Nt3}, the exact and simulation results are in perfect agreement, which confirms the correctness of the exact analysis. Besides, it can be seen  that the asymptotic results coincide well with the exact and simulation ones at high SNR, which validates the asymptotic results as well.
\begin{figure}[!t]
\centering
\includegraphics[width=2.8in]{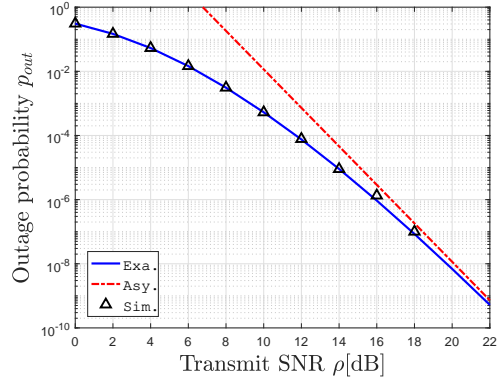}
\caption{Outage probability versus the transmit SNR $\rho$ with ${N_t}=3$, ${N_r}=2$, ${\mathbf t}=(2.7,0.2,0.1)$, ${\mathbf r}=(1.9, 0.1)$, and $R=$2bps/Hz.}\label{fig:out_Nt3}
\end{figure}
\vspace{-0.8em}
\subsection{Coding and Modulation Gain}
Fig. \ref{fig:code_gain} illustrates the impacts of the transmission rate $R$ on the coding and modulation gain ${\mathcal C(R)}$ for different numbers of transmit and receive antennas. It is shown in Fig. \ref{fig:code_gain} that ${\mathcal C(R)}$ increases with the number of antennas, which further justifies the benefit of using MIMO. Additionally, it can be observed from Fig. \ref{fig:code_gain} that the increase of the transmission rate impairs the coding and modulation gain ${\mathcal C(R)}$, which consequently leads to the deterioration of the outage performance. This is consistent with the asymptotic analysis in Section \ref{sec:mc_gain}. Aside from degrading the outage performance, the increase of the transmission rate causes the enhancement of the system throughput. The two opposite effects force us to properly select the transmission rate in practice. Fortunately, the optimal rate selection can be eased by using the asymptotic outage probability thanks to its increasing monotonicity and convexity with respect to the transmission rate.
\begin{figure}[!t]
\begin{center}
\includegraphics[width=2.8in]{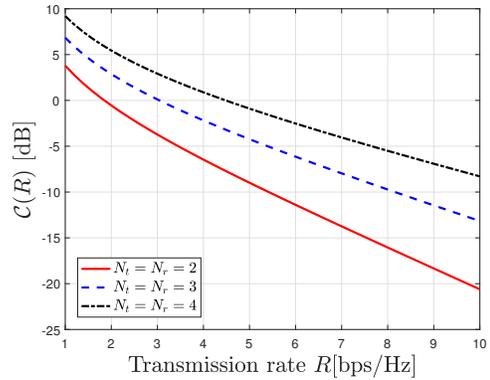}
\caption{Coding and modulation gain ${\mathcal C(R)}$ versus the transmission rate $R$.}\label{fig:code_gain}
\end{center}
\end{figure}
\vspace{-0.8em}
\subsection{Impact of Spatial Correlation}
In Fig. \ref{fig:corr_imp}, the outage probability is plotted against the transmit SNR under three different transmit correlation matrices, i.e., ${\bf{R}}_{t_1}$, ${\bf{R}}_{t_2}$ and ${\bf{R}}_{t_3}$. For notational simplicity, the vectors of the eigenvalues are set as ${\mathbf t}_1 = (1,1,1)$, ${\mathbf t}_2=(2.3,0.5,0.2)$ and ${\mathbf t}_3 = (2.7,0.2,0.1)$. It is clear from the figure that the spatial correlation does not affect the diversity order. Moreover, according to the concept of majorization, the relationship of the transmit correlation matrices follows as ${{\bf{R}}_{t_3}}\succeq{{\bf{R}}_{t_2}}\succeq{{\bf{R}}_{t_1}}$, and $\mathbf{R}_{t_3}$ are the most correlated correlation matrix among them. It is readily observed in Fig. \ref{fig:corr_imp} that the outage probability curve associated with $\mathbf{R}_{t_3}$ displays the worst performance. The numerical result corroborates the validity of the analytical results in Section \ref{sec:scorr}.
\begin{figure}[!t]
\begin{center}
\includegraphics[width=2.8in]{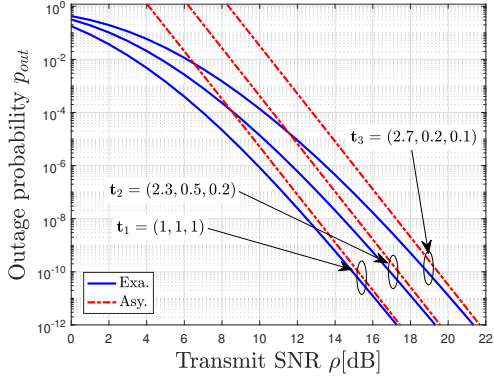}
\caption{Outage probability versus the transmit SNR $\rho$ with ${N_t}={N_r}=3$ and ${\mathbf r}=(2.7,0.2,0.1)$.}\label{fig:corr_imp}
\end{center}
\end{figure}
\vspace{-1em}
\section{Conclusions}\label{sec:con}
This paper has derived novel representation for the outage probabilitiy of the MIMO system by invoking Mellin transform. The compact and simple expression not only has enabled the accurate evaluation of the outage probability, but also has facilitated the asymptotic analysis under high SNR to gain a profound understanding of fading effects and MIMO configurations, which has never been performed in the literature. On the one hand, the asymptotic results have revealed meaningful insights into the effects of the spatial correlation, the number of antennas,  and transmission rate. For instance, the spatial correlation degenerates the outage performance, while full diversity can be achieved even at the presence of the spatial correlation. On the other hand, the asymptotic result have paved the way for the simplification of practical system designs. For example, the increasing monotonicity and convexity of the asymptotic outage probability will facilitate the proper selection of target transmission rate.
\vspace{-1em}
\appendices
\section{}\label{app:full_mellin}
By substituting \eqref{eqn:eig_pdf_full} into \eqref{eqn:mellin_G}, using the identity $\Delta \left( {\bs{\lambda }} \right) = \det \left( {{{\left\{ {{\lambda _i}^{j-1}} \right\}}_{{i,j}}}} \right)$ and the Leibniz formula for the determinant expansion \cite{horn2012matrix}, we can obtain that
\begin{equation}\label{eqn:delta_lambda}
\resizebox{0.9\hsize}{!}{$
\begin{aligned}
\varphi\left( s \right) &=\sum\nolimits_{{{\bf{k}}_{N_r}}} \frac{{{{\left( { - 1} \right)}^{\frac{{{N_r}\left( {{N_r} - 1} \right)}}{2}}}{\cal A}}}{{{N_r}!\Delta \left( {\bf{K}} \right)}}\sum\limits_{{{\bs{\sigma }}_1},{{\bs{\sigma }}_2} \in {S_{N_r}}} {\mathop{\rm sgn}} \left( {{{\bs{\sigma }}_1}} \right){\mathop{\rm sgn}} \left( {{{\bs{\sigma }}_2}} \right) \\
&\times\prod\limits_{i = 1}^{N_r} {\int\nolimits_0^\infty  {{{\left( {1 + \rho\lambda } \right)}^{s + N_r - {\sigma _{1,i}} - 1}}{\lambda ^{{k_{{\sigma _{2,i}}}} + {\sigma _{1,i}} + \tau}}d\lambda } }.
\end{aligned}$}
\end{equation}
where $S_{N_r}$ denotes the set of permutations of $\left\{1,2,\cdots,{N_r}\right\}$, $\bs \sigma_l \triangleq (\sigma_{l,1},\cdots,\sigma_{l,{N_r}})$ for $ l\in\left(1,2\right)$ and ${\rm sgn}(\bs \sigma_l)$ denotes the signature of the permutation $\bs \sigma_l$, i.e., ${\rm sgn}(\bs \sigma_l)$ is $1$ whenever the minimum number of transpositions necessary to reorder $\bs \sigma_l$ as $(1,2,\cdots,{N_r})$ is even, and $-1$ otherwise  \cite{simon2006capacity}.

\begin{lemma}\label{the:leb_for}
If $\eta ({\bs\sigma _1},{\bs\sigma _2})$ is a function of $\bs\sigma _1$ and ${\bs\sigma _2}$ irrespective of the ordering of the elements in the permutations of the set of two-tuples $\{(\sigma _{1,l},\sigma _{2,l}):l\in[1,N_r]\}$, the summation of ${{\mathop{\rm sgn}} \left( {{\bs\sigma _1}} \right){\mathop{\rm sgn}} \left( {{\bs\sigma _2}} \right)}\eta({\bs\sigma _1},{\bs\sigma _2})$ over all permutations of $\bs\sigma _1$ and $\bs\sigma _2$ degenerates to
\begin{equation}\label{eqn:leb_for_gen}
\begin{aligned}
&\sum\nolimits_{{{\bs\sigma _1},{\bs\sigma _2} \in {S_{N_r}}}} {{\mathop{\rm sgn}} \left( {{\bs\sigma _1}} \right){\mathop{\rm sgn}} \left( {{\bs\sigma _2}} \right)}\eta({\bs\sigma _1},{\bs\sigma _2})\\
&= \sum\nolimits_{{{\bs\sigma _1},{\bs\sigma _2} \in {S_{N_r}}}} { {{\rm{sgn}}\left( {{{\bs\sigma }}} \right){\mathop{\rm sgn}} \left( {\bar{\bs\sigma}} \right)\eta({\bs\sigma},\bar{\bs\sigma})} }\\
&= N_r!\sum\nolimits_{{{\bs\sigma } \in {S_{N_r}}}} {{\mathop{\rm sgn}} \left( {{\bs\sigma }} \right)}\eta(\bar{\bs\sigma},{\bs\sigma }),
\end{aligned}
\end{equation}
where ${\bs\sigma}=(\sigma_1,\cdots,\sigma_{N_r})$ and $\bar{\bs\sigma}=(1,\cdots,N_r)$.
\end{lemma}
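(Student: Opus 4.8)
The plan is to exploit the standing hypothesis that $\eta$ depends only on the \emph{set} of column two‑tuples $\{(\sigma_{1,l},\sigma_{2,l}):l\in[1,N_r]\}$, so that we may apply any common permutation to the $N_r$ columns of both rows without changing the value of $\eta$. First I would observe that, for any pair $(\bs\sigma_1,\bs\sigma_2)\in S_{N_r}\times S_{N_r}$, reordering the columns by the permutation $\bs\sigma_1^{-1}$ sends the first row to the identity $\bar{\bs\sigma}=(1,\dots,N_r)$ and the second row to $\bs\sigma_2\circ\bs\sigma_1^{-1}$, whence the invariance hypothesis gives $\eta(\bs\sigma_1,\bs\sigma_2)=\eta(\bar{\bs\sigma},\,\bs\sigma_2\circ\bs\sigma_1^{-1})$. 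Reordering by $\bs\sigma_2^{-1}$ instead sends the second row to $\bar{\bs\sigma}$ and the first to $\bs\sigma_1\circ\bs\sigma_2^{-1}$, which, together with the signature identity below, accounts for the middle expression in \eqref{eqn:leb_for_gen}; the two forms then agree by the further symmetry $\eta(\bs\sigma,\bar{\bs\sigma})=\eta(\bar{\bs\sigma},\bs\sigma^{-1})$, obtained by sorting along the first coordinate.

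Next I would dispose of the signature factor. Since ${\rm sgn}$ is a homomorphism into $\{\pm1\}$, we have ${\rm sgn}(\bs\sigma_1)={\rm sgn}(\bs\sigma_1^{-1})$ and hence ${\rm sgn}(\bs\sigma_1){\rm sgn}(\bs\sigma_2)={\rm sgn}(\bs\sigma_1^{-1}){\rm sgn}(\bs\sigma_2)={\rm sgn}(\bs\sigma_2\circ\bs\sigma_1^{-1})$. Combining this with the first step, every summand can be rewritten as ${\rm sgn}(\bs\tau)\,\eta(\bar{\bs\sigma},\bs\tau)$ with $\bs\tau:=\bs\sigma_2\circ\bs\sigma_1^{-1}$, so that the double sum equals $\sum_{(\bs\sigma_1,\bs\sigma_2)\in S_{N_r}\times S_{N_r}}{\rm sgn}(\bs\tau)\,\eta(\bar{\bs\sigma},\bs\tau)$.

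Finally I would change the summation variable. The map $(\bs\sigma_1,\bs\sigma_2)\mapsto\bs\tau=\bs\sigma_2\circ\bs\sigma_1^{-1}$ from $S_{N_r}\times S_{N_r}$ onto $S_{N_r}$ is exactly $N_r!$‑to‑one, because for each fixed $\bs\tau$ its fibre is $\{(\bs\sigma_1,\bs\tau\circ\bs\sigma_1):\bs\sigma_1\in S_{N_r}\}$, of cardinality $N_r!$. Thus the sum collapses to $N_r!\sum_{\bs\tau\in S_{N_r}}{\rm sgn}(\bs\tau)\,\eta(\bar{\bs\sigma},\bs\tau)$, which is the claimed right‑hand side after renaming $\bs\tau$ as $\bs\sigma$. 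The one place that calls for care is pinning down the hypothesis on $\eta$: one must check that ``invariance under reordering the list of two‑tuples'' genuinely licenses the single relabelling $\bs\sigma_1^{-1}$ that simultaneously sorts the first row to $\bar{\bs\sigma}$, and then keep inverses and compositions straight through the signature manipulation. Everything else is elementary group theory together with the fibre count, so I do not anticipate any real difficulty.
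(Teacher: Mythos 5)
Your proposal is correct and follows essentially the same route as the paper's proof: both normalize each pair $(\bs\sigma_1,\bs\sigma_2)$ by a common reordering of the two-tuple columns so that one row becomes $\bar{\bs\sigma}$, use the fact that this reordering preserves the product of signatures (you via the homomorphism property ${\rm sgn}(\bs\sigma_2\circ\bs\sigma_1^{-1})={\rm sgn}(\bs\sigma_1){\rm sgn}(\bs\sigma_2)$, the paper via tracking transpositions), and then collapse the double sum by the $N_r!$-to-one fibre count. Your version is merely a more explicit, group-theoretic rendering of the same argument, including a cleaner justification of the passage between $\eta(\bs\sigma,\bar{\bs\sigma})$ and $\eta(\bar{\bs\sigma},\bs\sigma)$ via inversion.
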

\begin{proof}
Similar to the proofs of Leibniz formulae \cite[eq.(2)]{simon2006capacity} and \cite[eqs.(64-65)]{ghaderipoor2012application}, denote by $S_{N_r} \times S_{N_r}$ the Cartesian product of $S_{N_r}$. Hence, $(\bs \sigma_1, \bs \sigma_2)\in S_{N_r} \times S_{N_r}$. We further establish the one-to-one mapping $\vartheta(\bs \sigma_1, \bs \sigma_2)$ as a vector of ordered pairs $(\sigma _{1,l},\sigma _{2,l})$ for $l\in [1,N_r]$, i.e., $\vartheta(\bs \sigma_1, \bs \sigma_2) \triangleq \left((\sigma _{1,l},\sigma _{2,l}):l\in[1,N_r]\right)$. We thus reach the relation ${\mathop{\rm sgn}} \left( {{\bs\sigma _1}} \right){\mathop{\rm sgn}} \left( {{\bs\sigma _2}} \right) = {\mathop{\rm sgn}} \left( {{\bs\sigma }} \right){\mathop{\rm sgn}} \left( {\bar{\bs\sigma}} \right)$ after a certain number of transpositions to achieve $\vartheta(\bs \sigma, \bar{\bs\sigma})$ starting from $\vartheta(\bs \sigma_1, \bs \sigma_2)$, where ${\mathop{\rm sgn}} \left( {\bar{\bs\sigma}} \right)=1$. Secondary, according to the definition of $\eta ({\bs\sigma _1},{\bs\sigma _2})$, the one-to-one mapping and the cardinality of $S_{N_r}$, i.e., $|S_{N_r}|=N_r!$. Using the same manner the second step holds.
\end{proof}
Basing on Lemma \ref{the:leb_for} and the Leibniz formula for the determinant expansion \cite{horn2012matrix}, (\ref{eqn:delta_lambda}) can be simplified as
\begin{align}\label{eqn:mellin_trans_full_leibniz}
\varphi\left( s \right) = \sum\nolimits_{{{\bf{k}}_{N_r}}} {\frac{{{{\left( { - 1} \right)}^{\frac{{{N_r}\left( {{N_r} - 1} \right)}}{2}}}{\cal A}}}{{\Delta \left( {\bf{K}} \right)}}}{\xi \left( s \right)},
\end{align}
where ${\xi \left( s \right)}=\det [ {\{ {\int_0^\infty  {{{\left( {1 + \rho \lambda } \right)}^{s + {N_r} - i - 1}}{\lambda ^{{k_j} + i + \tau }}d\lambda } } \}}_{i,j} ]$. By the change of variable $x = 1/{{(1 + \rho \lambda )}}$, $\xi \left( s \right)$ can be expressed in terms of Beta function \cite[eq.(8.380.1)]{gradshteyn1965table} as
\begin{equation}\label{eqn:beta_fun}
 \xi \left( s \right)= {\rho ^{ - {k_j} - i - \tau  - 1}}{\rm{B}}\left( { - s - {N_r} - {k_j} - \tau ,{k_j} + i + \tau  + 1} \right).
\end{equation}
By using the relationship ${\rm B}(\alpha,\beta) = \Gamma(\alpha)\Gamma(\beta)/\Gamma(\alpha+\beta)$  \cite[eq.(8.384.1)]{gradshteyn1965table}
and the generalized Cauchy-Binet formula \cite[Lemma 4]{ghaderipoor2012application}, \eqref{eqn:mellin_trans_full_leibniz} can be simplified as
\begin{equation}\label{eqn:mel_fc_defa_sub}
\resizebox{0.8\hsize}{!}{$\begin{aligned}
\varphi\left( s \right)
& = \frac{{{{\left( { - 1} \right)}^{{N_r}\left( {{N_t} - {N_r}} \right)}}{\rho ^{ - \frac{1}{2}{N_r}\left( {{N_r} + 1} \right)}}\prod\nolimits_{i = 1}^{{N_r}} {{a_i}^{{N_r}}} \prod\nolimits_{j = 1}^{{N_t}} {{b_j}^{{N_r}}} }}{{\Delta \left( {\bf{A}} \right)\Delta \left( {\bf{B}} \right)\prod\nolimits_{i = 1}^{{N_r}} {{{\left( {s + i - 2} \right)}^{i - 1}}} }} \\
&\times\det \left( \begin{array}{l}
{\left\{ {\sum\limits_{k = 0}^\infty  {\frac{{\Gamma \left( { - s - {N_r} - k + 1} \right)}}{{\Gamma \left( { - s - {N_r} + 2} \right)}}{{\left( {-\frac{{{a_i}{b_j}}}{\rho }} \right)}^k}} } \right\}_{1 \le i \le {N_r},j}}\\
{\left\{ {{b_j}^{{N_t} - i}} \right\}_{{N_r} + 1 \le i \le {N_t},j}}
\end{array} \right)
\end{aligned}.$}
\end{equation}
By using eqs. (8.380.1), (8.384.1), and (8.384.3) in \cite{gradshteyn1965table}, the infinite series in \eqref{eqn:mel_fc_defa_sub} can be rewritten as
\begin{equation}\label{eqn:inte_tric}
\resizebox{0.9\hsize}{!}{$
\begin{aligned}
&\sum\limits_{k = 0}^\infty  {\frac{{\Gamma \left( { - s - {N_r} - k + 1} \right)}}{{\Gamma \left( { - s - {N_r} + 2} \right)}}{{\left( { - \frac{{{a_i}{b_j}}}{\rho }} \right)}^k}}\\ &=\int_0^\infty  {{{\left( {1 + y} \right)}^{s + {N_r} - 2}}{e^{ - \frac{{{a_i}{b_j}}}{\rho }y}}dy}=\Psi \left( {1,s + {N_r};\frac{{{a_i}{b_j}}}{\rho }} \right)
\end{aligned},$}
\end{equation}
where the last equality holds by using  (9.211.4) in \cite{gradshteyn1965table}, $\Psi \left( \cdot,\cdot;\cdot \right)$ denotes  confluent hypergeometric function.
By substituting \eqref{eqn:inte_tric} into (\ref{eqn:mel_fc_defa_sub}), we finally arrive at \eqref{eqn:mellin_fullc_tricomi}.
\vspace{-1em}
\section{}\label{app:full_cdf}
By applying the determinant expansion to \eqref{eqn:mellin_fullc_tricomi}, we get
\begin{equation}\label{eqn:cdf_G_fullc}
\resizebox{0.85\hsize}{!}{$\begin{aligned}
&{F_G}\left( x \right) = \frac{1}{{2\pi {\rm{i}}}}\frac{{\Lambda\prod\nolimits_{i = 1}^{N_r} {{a_i}^{N_r}} \prod\nolimits_{j = 1}^{N_t} {{b_j}^{N_r}} }}{{\rho^{\frac{1}{2}{{{N_r}\left( {{N_r} + 1} \right)}}}}}\sum\limits_{{\bs{\sigma }} \in {S_{N_t}}} {{\mathop{\rm sgn}} \left( {\bs{\sigma }} \right)}\\
&\times {\prod\limits_{i = {N_r} + 1}^{N_t} {{b_{{\sigma _i}}}^{{N_t} - i}} }\int\nolimits_{c - {\rm i}\infty }^{c + {\rm i}\infty } \frac{{\Gamma \left( { - s} \right)}}{{\Gamma \left( {1 - s} \right)}}\prod\limits_{j = 1}^{{N_r}-1} \frac{{\Gamma \left( { - s - {N_r} + 1} \right)}}{{\Gamma \left( { - s - j + 1} \right)}}\\
&\times\prod\limits_{i = 1}^{N_r}\Psi \left( {1,s + 1 + {N_r};\frac{{{a_{{i}}}{b_{\sigma _i}}}}{\rho }} \right){x^{ - s}} ds
\end{aligned}.$}
\end{equation}
By using the definition of $\Xi \left( {a,\alpha ,A,\varphi } \right) = {A^{\varphi  + a + \alpha s - 1}}\Psi \left( {\varphi ,\varphi  + a + \alpha s;A} \right)$ in \cite{shi2017asymptotic} and identifying the integration in \eqref{eqn:cdf_G_fullc} with the generalized Fox's H function \cite{yilmaz2010outage}, \eqref{eqn:cdf_G_fullc} can finally be represented as (\ref{eqn:F_G_CDF_fullcorr}).
\vspace{-0.5em}
\section{}\label{app:asy_y_3}
Applying property 2 in \cite{shi2017asymptotic} to (\ref{eqn:F_G_CDF_fullcorr}) gives rise to
\begin{equation}\label{eqn:F_G_CDF_fullcora_p2}
\resizebox{0.8\hsize}{!}{$\begin{aligned}
&{\mathcal Y_{ {{\bs\sigma}}}(x)}={{{{\rho }}}^{-{N_r}^2}}\prod\limits_{i = 1}^{N_r} {{a_i}^{N_r}}{{b_{\sigma_i}}^{N_r}}\frac{1}{{2\pi {\rm{i}}}}\int\nolimits_{-c - {\rm i}\infty }^{-c + {\rm i}\infty } \frac{{\Gamma \left( s \right)}}{{\Gamma \left( {1 + s} \right)}}\\
&\times\prod\limits_{j = 1}^{{N_r}-1}{\frac{{\Gamma \left( {s - {N_r} + 1} \right)}}{{\Gamma \left( {s - j + 1} \right)}}}\prod\limits_{i = 1}^{N_r} {\Psi \left( {1,1 + {N_r} - s;{\frac{{{a_i}{b_{{\sigma _i}}}}}{\rho }}} \right)}{x^s} ds
\end{aligned}.$}
\end{equation}
We set $c < -{N_t}{N_r}+\frac{1}{2}{{N_r}({N_r}-1)}$. By using \cite[eq.(9.210.2)]{gradshteyn1965table} and ignoring the higher order term $o\left( {{\rho ^{ - {N_t}{N_r}-\frac{1}{2}N_r(N_r+1)}}} \right)$ in \eqref{eqn:F_G_CDF_fullcora_p2},
together with the series expansion as $\Psi\left( {\alpha ,\beta ;x} \right) = \sum\nolimits_{n = 0}^\infty  {{{{\left( \alpha  \right)}_n}}}{{{x^n}}}/{{{{\left( \beta  \right)}_n}}}/{{n!}}$  \cite[eq.(9.210.1)]{gradshteyn1965table}, where $(\cdot)_n$ is Pochhammer symbol, we finally arrive at \eqref{eqn:mellin_cdf_g_fullcsers_exp}.
\vspace{-1em}
\section{}\label{app:asy_out_full}
Substituting \eqref{eqn:mellin_cdf_g_fullcsers_exp} into \eqref{eqn:F_G_CDF_fullcorr} and swapping the orders of integration and summation produces, after some basic algebraic manipulations, (\ref{eqn:F_G_CDF_fullcorr}) can be further derived as
\begin{equation}
\begin{aligned}\label{eqn:F_G_mellin_trans_meijerg}
p_{out}=& \frac{\Lambda\prod\nolimits_{i = 1}^{N_r} {{a_i}^{N_r}} \prod\nolimits_{j = 1}^{N_t} {{b_j}^{N_r}}}{{\rho }^{\frac{1}{2}{{N_r}\left( {{N_r} + 1} \right)}}}\zeta(N_r)\xi(N_t) + o\left( {\rho ^{ - {N_t}{N_r}}} \right)
\end{aligned}.
\end{equation}
where \resizebox{0.8\hsize}{!}{$\xi(N_t)=\sum\limits_{{\bs{\sigma }} \in {S_{N_t}}} {{\mathop{\rm sgn}} \left( {\bs{\sigma }} \right)}\prod\limits_{i = 1}^{N_r} {{{\left( { - \frac{{{a_{{ i}}}{b_{\sigma_i}}}}{\rho }} \right)}^{{n_{i}}}}} \prod\limits_{i = {N_r} + 1}^{N_t} {{b_{\sigma_i}}^{{N_t} - { i}}}$} and \resizebox{0.9\hsize}{!}{$\zeta(N_r)=\sum\limits_{{n_1}, \cdots ,{n_{N_r}} = 0}^\infty  {{\frac{1}{{2\pi {\rm{i}}}}\int\nolimits_{-c - {\rm i}\infty }^{-c + {\rm i}\infty } {\frac{{\Gamma \left( s \right)}}{{\Gamma \left( {1 + s} \right)}}\prod\limits_{i = 1}^{N_r} {\frac{{\Gamma \left( {s - {N_r} - {n_i}} \right)}}{{\Gamma \left( {s - i + 1} \right)}}}  {2^{Rs}}ds} }}$.} By using the determinant expansion in $\xi(N_t)$ and expressing $\zeta(N_r)$ in terms of Meijer G-function, leads to
\begin{equation}\label{eqn:F_G_mellin_trans_det_2}
\begin{aligned}\xi(N_t)
&={\prod\limits_{i = 1}^{N_r} {{{\left( { - \frac{{{a_i}}}{\rho }} \right)}^{{n_i}}}} }\det \left( {\begin{array}{*{20}{c}}
{{{\left\{ {{b_j}^{{n_i}}} \right\}}_{1 \le i \le {N_r},j}}}\\
{{{\left\{ {{b_j}^{{N_t} - i}} \right\}}_{{N_r} + 1 \le i \le {N_t},j}}}
\end{array}} \right)\\
&= \prod\limits_{i = 1}^{{N_r}} {{{\left( { - \frac{{{a_i}}}{\rho }} \right)}^{{n_i}}}} {\rm{sgn}}\left( {\bf{n}} \right){( - 1)^{{N_r}({N_t} - {N_r})}}\Delta \left( {\bf{B}} \right)
\end{aligned},
\end{equation}
\begin{equation}\label{eqn:F_G_mellin_trans_det_1}
\resizebox{1\hsize}{!}{$\zeta(N_r) =\sum\limits_{{\bf{n}} \in {\mathbb {N} ^{N_r}}} {G_{{N_r} + 1,{N_r} + 1}^{0,{N_r} + 1}\left( {\left. {\begin{array}{*{20}{c}}
{1,1 + {N_r} + {n_{\rm{1}}}, \cdots ,1 + {N_r} + {n_{N_r}}}\\
{0,1, \cdots ,{N_r}}
\end{array}} \right|2^R} \right)}$},
\end{equation}
where ${\bf n} = (n_1,\cdots,n_{N_r})$. Notice that any term with $n_i=0,\cdots,{N_t}-{N_r}-1$ is equal to zero thanks to the basic property of determinant, the dominant terms with $\bf n$ belonging to the set of the permutations of ${\Omega _{N_r}} = \left\{ {{N_t} - {N_r}, \cdots ,{N_t} - 1} \right\}$ can produce non-zero determinants. Substituting (\ref{eqn:F_G_mellin_trans_det_1}) and (\ref{eqn:F_G_mellin_trans_det_2}) into (\ref{eqn:F_G_mellin_trans_meijerg}), and ignoring the terms with both zero value of the determinant and the order of $\rho$ larger than $N_tN_r$, $p_{out}$ can be asymptotically expanded as
\begin{equation}\label{eqn:F_G_mellin_trans_detzero}
\resizebox{0.8\hsize}{!}{$\begin{aligned}
p_{out} =&
 \frac{{\left( { - 1} \right)}^{\frac{1}{2}{{{N_r}\left( {{N_r} - 1} \right)}}}\prod\nolimits_{i = 1}^{N_r} {{a_i}^{N_r}} \prod\nolimits_{j = 1}^{N_t} {{b_j}^{N_r}} }{{{{{\rho }} }^{{N_t}{N_r}}}{\Delta \left( {\bf{A}} \right)}}\\
&\times G_{{N_r} + 1,{N_r} + 1}^{0,{N_r} + 1}\left( {\left. {\begin{array}{*{20}{c}}
{1,{N_t} + 1, \cdots ,{N_t} + {N_r}}\\
{0,1, \cdots ,{N_r}}
\end{array}} \right|2^R} \right)\\
&\times \sum\nolimits_{{\bf{n}} \in {\Omega _{N_r}}} {{\mathop{\rm sgn}} \left( {{\bf n}} \right)\prod\limits_{i = 1}^{N_r} {{a_i}^{{n_i}}} }  + o\left( {{\rho ^{ - {N_t}{N_r}}}} \right),
\end{aligned}$}
\end{equation}
where the Meijer-G function can be extracted from the summation as a common factor due to the fact that its value is independent of the order of the elements of $\bf n$. Since the following equality holds
\begin{equation}\label{eqn:detea_identi}
\sum\nolimits_{{\bf{n}} \in {\Omega _{N_r}}} {{\mathop{\rm sgn}} \left( {{\bf n}} \right)\prod\limits_{i = 1}^{N_r} {{a_i}^{{n_i}}} }= {\left( { - 1} \right)}^{\frac{1}{2}{{{N_r}\left( {{N_r} - 1} \right)}}}\prod\limits_{i = 1}^{N_r} {{a_i}^{{N_t} - {N_r}}} \Delta \left( {\bf{A}} \right),
\end{equation}
the asymptotic $p_{out}$ can be finally derived as \eqref{eqn:F_G_mellin_trans_detzerofina}.
\vspace{-1em}
\bibliographystyle{ieeetran}
\bibliography{mimo}

\end{document}